\numberwithin{equation}{section}
\newtheorem{theorem}{Theorem}[section]
\begin{document}

\title{Within host dynamics of SARS-CoV-2 in humans: Modeling immune responses and antiviral treatments}
\author[ISI]{Indrajit Ghosh\footnote{Corresponding author. Email: indra7math@gmail.com, indrajitg\_r@isical.ac.in}}
\address[ISI]{Agricultural and Ecological Research Unit, Indian Statistical Institute, Kolkata, West Bengal 700108, India}

\begin{abstract}
In December 2019, a newly discovered SARS-CoV-2 virus was emerged from China and propagated worldwide as a pandemic, resulting in about 3–5\% mortality. In the absence of preventive medicine or a ready to use vaccine, mathematical models can provide useful scientific insights about transmission patterns and targets for drug development. In this study, we propose a within-host mathematical model of SARS-CoV-2 infection considering innate and adaptive immune responses. We analyze the equilibrium points of the proposed model and obtain an expression of the basic reproduction number. We then numerically show the existence of a transcritical bifurcation. The proposed model is calibrated to real viral load data of two COVID-19 patients. Using the estimated parameters, we perform global sensitivity analysis with respect to the peak of viral load. Finally, we study the efficacy of antiviral drugs and vaccination on the dynamics of SARS-CoV-2 infection. Our results suggest that blocking the production of the virus by infected cells decreases the viral load more than reducing the infection rate of healthy cells. Vaccination is also found useful but during the vaccine development phase, blocking virus production from infected cells can be targeted for antiviral drug development.  
\end{abstract}

\begin{keyword}
SARS-CoV-2, Immune response, Model calibration, Numerical simulation, Treatments.
\end{keyword}

\maketitle

\section{Introduction}\label{sec1}
Coronaviruses are a large group of viruses that have the potential to transmit between hosts. These are enveloped in positive-sense, non-segmented RNA viruses belonging to the Coronaviridae family (Nidovirales order) and widely distributed in humans and other mammals \cite{huang2020clinical}. The virus is responsible for a range of symptoms including fever, cough, and shortness of breath \cite{huang2020clinical}. Some patients have reported radiographic changes in their ground-glass lungs, healthy or lower than average white blood cell lymphocyte, and platelet counts; hypoxaemia; and deranged liver and renal function. Since first discovery and identification of coronavirus in 1965, three significant outbreaks occurred, caused by emerging, highly pathogenic coronaviruses, namely the 2003 outbreak of "Severe Acute Respiratory Syndrome" (SARS) in mainland China \cite{gumel2004modelling,li2003angiotensin}, the 2012 outbreak of "Middle East Respiratory Syndrome" (MERS) in Saudi Arabia \cite{de2013commentary,de2016sars}, and the 2015 outbreak of MERS in South Korea \cite{cowling2015preliminary,kim2017middle,sardar2020realistic}. These outbreaks resulted in SARS and MERS cases confirmed by more than 8000 and 2400, respectively \cite{kwok2019epidemic}. A newer and genetically similar coronavirus is responsible for the coronavirus disease 2019 (COVID-19). The virus is named SARS-CoV-2. Despite a relatively lower case fatality rate compared to SARS and MERS, the COVID-19 spreads rapidly and infects more people than the SARS and MERS. Despite strict intervention measures implemented in the region where the COVID-19 was originated, the infection spread locally and elsewhere very rapidly. COVID-19 has been declared a pandemic by the World Health Organization in January 2020. Since its first isolation in Wuhan, China in December 2019, it has caused outbreak with more than 10 million confirmed infections and above 500 thousand reported deaths worldwide as of 28 June 2020. The affected countries around the globe are fighting the virus by implementing social distancing and isolation strategies. Unfortunately, the COVID-19 has neither a preventive medicine nor a ready to use vaccine. Multiple approaches are adopted in the development of Coronavirus vaccines; most of these targets the surface-exposed spike (S) glycoprotein or S protein as the primary inducer of neutralizing antibodies \cite{dhama2020covid,lurie2020developing}. In fact, either monoclonal antibody or vaccine approaches have failed to neutralize and protect from previous coronavirus infections \cite{menachery2015sars}. Therefore, individual behaviour (e.g. early self-isolation and social distancing), as well as preventive measures such as hand washing, covering when coughing, are critical to control the spread of COVID-19 \cite{vargas2020host}. However, researchers have been putting more effort into finding a solution to this pandemic situation \cite{hu2020insights,yaqinuddin2020innate,tay2020trinity}.

In addition to medical and biological research, theoretical studies based on mathematical models may also play an important role throughout this anti-epidemic fight in understanding the epidemic character traits of the outbreak, in having to decide on the measures to reduce the spread and in understanding within-host patterns of virus transmission. While there are many mathematical models developed at an epidemiological level for COVID-19 \cite{wu2020nowcasting,tang2020updated,kucharski2020early,kochanczyk2020dynamics}, there are very few within-host level studies to understand SARS-CoV-2 replication cycle and its interactions with the innate and adaptive immune responses \cite{du2020mathematical,vargas2020host}. In these few previous studies, authors studied target cell models and target cell models with eclipse phase. Therefore, detailed research with immune responses is necessary for the understanding of SARS-CoV-2 spread inside the human body. The human immune system is comprised of innate and adaptive immune responses. While the adaptive immune system is both fast and effective at targeting invasions by previously encountered pathogens, its role in host defence in the first days of a new infection is secondary to that of the innate immune system.

Motivated by this discussion, we aim to develop a within-host mathematical model of SARS-CoV-2 infection considering human immune responses. This model can be used as a basis for understanding characterized patterns of disease severity in humans. Moreover, we intend to use real viral load data from COVID-19 positive patients to calibrate the proposed model so that the parameters are realistic for further inference. The main goal is to compare the efficacy of various antiviral drugs and identify the most beneficial target. 

The rest of the paper is organized as follows: in Section \ref{model_description}, we formulate the compartmental model of within human SARS-CoV-2 transmission; the equilibrium points of the proposed model are analyzed and the basic reproduction number is obtained in Section \ref{equilibria}; viral load time series, transcritical bifurcation, fitting model to real data and global sensitivity analysis are presented in Section \ref{numerical}; in Section \ref{treatment}, we study the efficacy of antiviral drugs and vaccination; finally, the obtained results are discussed in Section \ref{discussion}. 

\section{The mathematical model}\label{model_description}
A deterministic ordinary differential equation model describing cell--virus--immune response interaction dynamics of SARS-CoV-2 infection is being formulated. Time-dependent state variables are taken to represent the compartments. A general mathematical model for the underlying dynamics of virus-host cell interaction has been studied in this context \cite{du2020mathematical,vargas2020host}. However, the basic principles that underlay models of virus dynamics are as follows: Healthy uninfected cells, $H(t)$, are infected when they meet free viruses, $V(t)$. Infected cells, $I(t)$, produce new virus particles that leave the cell and find other susceptible target cells. Whenever a human is infected with SARS-CoV-2, his innate and adaptive immune responses work together to neutralize the threat of SARS-CoV-2 infection \cite{tufan2020covid,mckechnie2020innate,tay2020trinity}. The innate immune response works non-specifically and immediately after the viral attack. Cells and proteins of the innate immune system are ever-present in a healthy host and can respond to invading pathogens within the first minutes and hours of infection \cite{ciupe2017host}. This system is of great importance in the sense that it is preventing the establishment of new infections during the activation time of the adaptive immune system. It is believed that Cytokines are an essential component of the immune system \cite{sasmal2017mathematical}. They are a family of small soluble proteins secreted by different cells. They can be loosely classified into one of four families: the haematopoietins, the immunoglobin superfamily, the tumour necrosis factor family and the interferons (IFN). Cytokines modulate the balance between innate and adaptive immune responses. The IFNs are perhaps the most critical cytokines in the initial innate response to viral infection. They are classified into two types: IFN-$\alpha$ (a family of related proteins) and the single protein IFN-$\beta$ together form type I; IFN-$\gamma$ is the sole and unrelated type II IFN. IFN-$\alpha$ and IFN-$\beta$ are secreted by cells in response to viral infection and promote an antiviral response in otherwise susceptible cells. Cytokines C(t) is vital in inhibiting viral replication and modulating downstream effects of the immune response. Specific cytokines activate natural killer (NK) cells N(t), which play an essential role in killing virus-infected cells. As in \cite{canini2011population,ben2015minimal}, the rate of NK cells increment by cytokines is taken as $r C$, whereas NK cells die at a rate $\mu_5$. However, Against the inhibiting mechanism of cytokines, the viruses often target the JaK/STAT pathway to decrease the production of IFNs. This mechanism, known as immunosupression, is observed for SARS-CoV-2 \cite{raoult2020coronavirus}. The functional form of a decrease in the cytokine production rate is assumed to be $\frac{k_2 I}{1 + \gamma V}$. 

Meanwhile, cytokines also activate the adaptive immune system, mainly the cytotoxic T-lymphocytes T(t) at a rate $\lambda_1$. Interleukin-2 (IL-2) is a type of cytokine signaling molecule in the immune system that is very important to activate T-cells. T-cells finds virus infected cells and kill them at a rate $p_1$. T-cells subsequently activate B-lymphocytes B(t) at a rate $\lambda_2$ to produce antibody against the virus. B-cells mainly secrete IgM and IgG antibodies that are released in the blood and lymph fluid, where they specifically recognize and neutralize the SARS-CoV-2 viral particles \cite{sasmal2017mathematical,tufan2020covid}. Meanwhile, antibody levels A(t) are increasing with the aim of halting infection (and in future providing protection against a subsequent infection). A schematic flow diagram of the model is depicted in Fig. \ref{Fig:scematic_flow_diagram}.

\begin{figure}[ht]
	\includegraphics[width=1.0\textwidth]{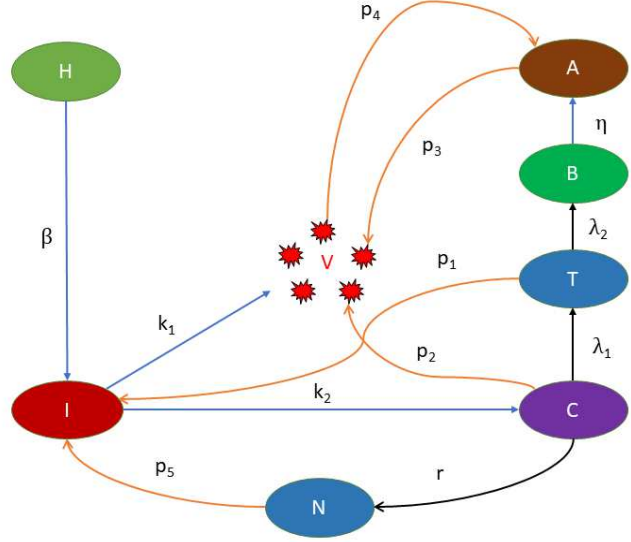}
	\caption{Schematic diagram of the proposed model. The blue arrows indicate production, black arrows indicate activation and orange ones show inhibition by different cells.}
	\label{Fig:scematic_flow_diagram}
\end{figure}

Finally, the cell--virus--immune response interaction dynamics of SARS-CoV-2 infection are governed by the following system of differential equations:

\begin{eqnarray}\label{eq1}
\frac{dH}{dt}&=& \Pi-\beta H V -\mu_1 H,\nonumber\\
\frac{dI}{dt}&=& \beta H V- p_1 T I - p_5 N I - \mu_2 I,\nonumber\\
\frac{dV}{dt}&=& k_1 I - p_2 CV - p_3 AV - \mu_3 V,\nonumber\\
\frac{dC}{dt}&=& \frac{k_2 I}{1 + \gamma V} - \mu_4 C,\\
\frac{dN}{dt}&=& rC  - \mu_5 N,\nonumber\\
\frac{dT}{dt}&=& \lambda_1 C T - \mu_6 T,\nonumber\\
\frac{dB}{dt}&=& \lambda_2 T B - \mu_7 B,\nonumber\\
\frac{dA}{dt}&=& G(t-\tau)\eta B - p_4 AV -\mu_8 A.\nonumber
\end{eqnarray}

The time delay $\tau$ introduced through the Heaviside step function \cite{fowler1981approximate}, is the time period that is required for the first production of antibodies after the T-lymphocytes and B-lymphocytes interact. This delay is biologically significant since the production of antibodies after the virions have associated with the B-lymphocytes is a complex process involving multiple steps. The B-cells have to undergo differentiations before they can be transformed into the plasma cells capable of producing antibodies \cite{gujarati2014virus}. The Heaviside step function $G(t)$ is defined as follows,

\begin{align*}
    G(t-\tau) &= 1, \text{if} \; \; t>\tau\\
              &= 0, \text{if} \; \; t<\tau
\end{align*}

The model \ref{eq1} has initial conditions given by:
$H(0)=H_0 \geq 0$, $I(0)=I_0 \geq 0$, $V(0)=V_0 \geq 0$, $C(0)=C_0 \geq 0$, $N(0)=N_0 \geq 0$, $T(0)=T_0 \geq 0$, $B(0)=B_0 \geq 0$, and $A(0)=A_0 \geq 0$.\\

\begin{table}
\caption{Parameters used in model \ref{eq1} }
\label{table2}
\centering
\begin{tabular}{|p{6cm}|c|p{6cm}|c|}
\hline
Parameter & Symbol & value/Range & Reference \\
\hline
Production rate of healthy cells & $\Pi$ & 4 $\times$ $10^3$ cells ml$^{-1}$ day$^{-1}$ & \cite{nikin2015role}\\
Rate at which healthy cells are converted to infected cells & $\beta$ & (5 -- 561) $\times$ $10^{-9}$ ml (RNA copies)$^{-1}$ day$^{-1}$ & \cite{vargas2020host} \\
Strength of immunosupresion & $\gamma$ & 0.5 ml (RNA copies)$^{-1}$ & Assumed\\
Rate at which T-cells destroy infected cells & $p_1$ & 0.001 ml cells$^{-1}$ day$^{-1}$ & \cite{clapham2014within} \\
Rate at which viral particles are neutralized by cytokines & $p_2$ & (0 -- 1) ml cells$^{-1}$ day$^{-1}$ & Estimated\\
Rate at which viral particles are neutralized by antibodies & $p_3$ & (0 -- 1) ml molecules$^{-1}$ day$^{-1}$ & Estimated \\
Rate at which virus neutralize antibodies & $p_4$ & 3 $\times$ $10^{-7}$ ml (RNA copies)$^{-1}$ day$^{-1}$ & \cite{nikin2015role} \\
Rate at which infected cells are diminished by NK cells & $p_5$ & 5.74 $\times$ $10^{-4}$ ml cells$^{-1}$ day$^{-1}$  & \cite{ben2015minimal} \\
Production rate of virus from infected cells & $k_1$ & (8.2 -- 525) day$^{-1}$ & \cite{vargas2020host} \\
Production rate of cytokines & $k_2$ & (0 -- 10) day$^{-1}$ & Assumed \\
Activation rate of NK cells & $r$ & 0.52 day$^{-1}$ & \cite{ben2015minimal} \\
Activation rate of T cells & $\lambda_1$ & 0.1 ml cells$^{-1}$ day$^{-1}$ & \cite{sasmal2017mathematical}\\
Activation rate of B cells & $\lambda_2$ & 0.01 ml cells$^{-1}$ day$^{-1}$ & \cite{sasmal2017mathematical}\\
Rate at which antibodies are produced & $\eta$ & (0 - 1) day$^{-1}$ & \cite{gujarati2014virus} \\
Natural death rate of Healthy cells and protected cells & $\mu_1$ & 0.14 day$^{-1}$ & \cite{sasmal2017mathematical} \\
Natural death rate of infected cells & $\mu_2$ & (0 -- 1) day$^{-1}$ & Assumed \\
Clearance rate of virus & $\mu_3$ & (0 -- 1) day$^{-1}$ & Estimated \\
Natural death rate of cytokines & $\mu_4$ & 0.7 day$^{-1}$ & Assumed \\
Natural death rate of NK cells & $\mu_5$ & 0.07 day$^{-1}$ & \cite{ben2015minimal} \\
Natural death rate of T cells & $\mu_6$ & 1 day$^{-1}$ & \cite{sasmal2017mathematical} \\
Natural death rate of B cells & $\mu_7$ & 0.2 day$^{-1}$ & \cite{nikin2015role} \\
Natural death rate of antibodies & $\mu_8$ & 0.07 day$^{-1}$ & \cite{sasmal2017mathematical} \\
Time delay for antibody production & $\tau$ & 7 -- 14 days & \cite{immunity2020who} \\
\hline
\end{tabular}
\end{table}

\section{Equilibria and Basic reproduction number}\label{equilibria}
There are four type of equilibia of the system \eqref{eq1}, namely,

(a) The disease free equilibrium (DFE) given by $E_0=(\frac{\Pi}{\mu_1},0,0,0,0,0,0,0)$.

(b) The virus persistence equilibrium in the absence of immune responses, given by
$E_1=(H_1,I_1,V_1,0,0,0,0,0)$, where 
$H_1=\frac{\Pi}{\mu_1 R_0}$, $I_1=\frac{\mu_1 \mu_3}{\beta k_1} (R_0 - 1)$ and $V_1=\frac{\mu_1}{\beta} (R_0 - 1)$ with $R_0=\displaystyle \frac{\Pi \beta k_1}{\mu_1 \mu_2 \mu_3}$. Clearly, this equilibrium exists only when $R_0>1$.

(c) The virus persistence equilibrium in the absence of adaptive immune responses, given by $E_2=(H_2,I_2,V_2,C_2,N_2,0,0,0)$, where (assume, $Q=\beta H_2 V_2$) 
$H_2=\frac{\Pi - Q}{\mu_1}$, $N_2=\frac{r C}{\mu_5}$, $I_2=\frac{Q}{ \mu_2 + p_5 N_2}$, $V_2= \frac{1}{\gamma} \left[\frac{k_2 I_2}{\mu_4 C_2}-1\right]$ and $C_2$ is given by the roots of the following cubic equation
$$\frac{p_2 p_5 \mu_4 r}{\mu_5} C_2^3 + (\mu_2 \mu_4 p_2 + \frac{\mu_3 \mu_4 p_5 r}{\mu_5} )C_2^2+(\mu_2 \mu_3 \mu_4 + \mu_4 \gamma k_1 Q - k_2 p_2 Q)C_2 - k_2 \mu_3 Q=0$$. 

Note that, irrespective of the sign of the coefficient of $C_2$, Descartes' rule of sign ensure existence of exactly one positive root whenever $\frac{k_2 I_2}{\mu_4 C_2}>1$.

(d) The all cells and virus co-existence equilibrium, given by
$E_3=(H_3,I_3,V_3,c_3,N_3,T_3,B_3,A_3)$, where (assume, $Q=\beta H_3 V_3$) 
$H_3=\frac{\Pi - Q}{\mu_1}$, $I_3=\frac{\mu_4 \mu_6 R_1}{\lambda_1 k_2}$, $V_3= \frac{1}{\gamma}\left[ R_1 - 1\right]$, $C_3=\frac{\mu_6}{\lambda_1}$, $N_3=\frac{r C_3}{\mu_5}$, $T_3=\frac{\mu_7}{\lambda_2}$, $B_3=\frac{A_3}{\eta} [p_4 V_3 + \mu_8]$ and 
$A_3= \frac{1}{p_3 V_3} [R_2 - 1]$, with

$$R_1=\frac{\lambda_1 k_2 Q}{\mu_4 \mu_6 (\frac{p_1 \mu_7}{\lambda_2} + \frac{r p_5 \mu_6}{\lambda_1 \mu_5}+ \mu_5)} $$ and 
$$R_2=\frac{\gamma \lambda_1^2 k_1 k_2 Q}{R_1 \mu_4 \mu_6(\lambda_1 \mu_3 + p_2 \mu_6)(R_1-1)}.$$ 

It can be noted that this equilibrium exists only when $R_1>1$ and $R_2>1$.

\begin{theorem}\label{theorem_R1}
The DFE $E_0$ of the system (\ref{eq1}) is locally asymptotically stable, if $R_0<1$, and unstable if $R_0>1$, where
\begin{equation}
\centering
\displaystyle R_0=\displaystyle \frac{\Pi \beta k_1}{\mu_1 \mu_2 \mu_3}.
\label{EQ:eqn_R0}
\end{equation}
\end{theorem}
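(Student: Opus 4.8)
The plan is to prove the statement by the first (Lyapunov) method: linearize \eqref{eq1} about $E_0$ and locate the spectrum of the Jacobian. A preliminary remark is that for $t>\tau$ one has $G(t-\tau)=1$, so on the time scale relevant to asymptotic behaviour the system is autonomous and the usual linearization criterion applies; moreover, since $B\equiv 0$ at $E_0$, the switch at $t=\tau$ does not alter the linearized dynamics and may be ignored (the term contributes only through the constant entry $\eta$ in the $A$-equation). I would then compute the $8\times 8$ Jacobian $J(E_0)$ with respect to the ordering $(H,I,V,C,N,T,B,A)$. Because $I=V=C=N=T=B=A=0$ and $H=\Pi/\mu_1$ at $E_0$, each bilinear/saturating term ($\beta HV$, $p_1TI$, $p_5NI$, $p_2CV$, $p_3AV$, $k_2I/(1+\gamma V)$, $\lambda_1CT$, $\lambda_2TB$, $p_4AV$) contributes only through the one factor that is not identically zero there, so almost all off-diagonal entries vanish.

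Next I would exploit the structure of $J(E_0)$. The $H$-column contains only the diagonal entry $-\mu_1$, so $-\mu_1$ splits off as an eigenvalue and the $H$ row and column may be deleted. The remaining $7\times 7$ matrix, on the variables $(I,V,C,N,T,B,A)$, is block lower triangular for the grouping $\{I,V\},\{C\},\{N\},\{T\},\{B\},\{A\}$: the only inter-group couplings are $I\to C$ (entry $k_2$), $C\to N$ (entry $r$) and $B\to A$ (entry $\eta$), all of which sit below the diagonal blocks. Consequently the spectrum of $J(E_0)$ is $\{-\mu_1,-\mu_4,-\mu_5,-\mu_6,-\mu_7,-\mu_8\}$ together with the two eigenvalues of
\[
M=\begin{pmatrix}-\mu_2 & \dfrac{\beta\Pi}{\mu_1}\\[1mm] k_1 & -\mu_3\end{pmatrix}.
\]

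It then remains to apply the trace–determinant (Routh–Hurwitz) test to $M$. Its trace is $-(\mu_2+\mu_3)<0$ unconditionally, while its determinant is $\det M=\mu_2\mu_3-\dfrac{\beta\Pi k_1}{\mu_1}=\mu_2\mu_3\bigl(1-R_0\bigr)$ with $R_0$ as in \eqref{EQ:eqn_R0}. If $R_0<1$ then $\det M>0$, so both roots of $M$ have negative real part; together with the manifestly negative $-\mu_i$ this makes every eigenvalue of $J(E_0)$ have negative real part, hence $E_0$ is locally asymptotically stable. If $R_0>1$ then $\det M<0$, so $M$ has two real eigenvalues of opposite sign; the positive one yields an eigenvalue of $J(E_0)$ with positive real part, hence $E_0$ is unstable. (The borderline case $R_0=1$, where $\det M=0$ and a simple zero eigenvalue appears — the source of the transcritical bifurcation noted in the abstract — is intentionally outside the scope of the statement.)

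I do not anticipate a real obstacle: the argument is routine. The step demanding the most care is the correct assembly of $J(E_0)$ — in particular differentiating the immunosuppression term $k_2I/(1+\gamma V)$ and the bilinear immune terms, and legitimately disposing of the switched antibody term — followed by the recognition of the block-triangular form, which is precisely what collapses the $8\times 8$ eigenvalue problem onto the $2\times 2$ block $M$ carrying all the $R_0$-dependence. As a consistency check one may note that the same threshold $R_0=\Pi\beta k_1/(\mu_1\mu_2\mu_3)$ is obtained via the next-generation matrix applied to the $(I,V)$ subsystem, which also explains why $E_1$ bifurcates from $E_0$ exactly at $R_0=1$.
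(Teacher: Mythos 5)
Your proposal is correct and follows essentially the same route as the paper: linearize at $E_0$, observe that the Jacobian's structure splits off the eigenvalues $-\mu_1,-\mu_4,\dots,-\mu_8$, and reduce the problem to the $2\times 2$ block on $(I,V)$, whose characteristic polynomial $\Lambda^2+(\mu_2+\mu_3)\Lambda+\mu_2\mu_3(1-R_0)$ is exactly the paper's $C(\Lambda)$, handled by the same Routh--Hurwitz/trace--determinant argument. Your added remarks on disposing of the Heaviside term and on the block-triangular grouping are sound refinements of the same computation, not a different method.
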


\begin{proof}
The Jacobian of the system (\ref{eq1}) at $E_0$ is given as \\
\begin{eqnarray}
\begin{array}{llllllll}
J(E_0)=\begin{bmatrix}
    -\mu_1 & 0 & -\frac{\beta \Pi}{\mu_1} & 0 & 0 & 0 & 0 & 0\\
    0 & -\mu_2 &  \frac{\beta \Pi}{\mu_1} & 0 & 0 & 0 & 0 & 0\\
    0 & k_1 & -\mu_3 & 0 & 0 & 0 & 0 & 0\\
    0 & k_2 & 0 & -\mu_4 & 0 & 0 & 0 & 0\\
    0 & 0 & 0 & r & -\mu_5  & 0 & 0 & 0\\
    0 & 0 & 0 & 0 & 0 &  -\mu_6  & 0 & 0\\
    0 & 0 & 0 & 0 & 0 & 0 & -\mu_7 & 0\\
    0 & 0 & 0 & 0 & 0 & 0 & G(t-\tau)\eta & -\mu_8\\
\end{bmatrix}\\\\
\end{array}
\label{EQ:eqn 1.6}
\end{eqnarray}

Clearly, $-\mu_1$, $-\mu_4$, $-\mu_5$, $-\mu_6$, $-\mu_7$ and $-\mu_8$ are eigenvalues of this Jacobean matrix and other two eigenvalues are given by the roots of the following equation\\

\begin{eqnarray}
C(\Lambda):= \Lambda^2 + a_1 \Lambda +a_2 &=& 0\nonumber\\
\end{eqnarray}

where
\begin{align}
a_1 &= \mu_2 + \mu_3\nonumber\\
a_2 &= \mu_2 \mu_3 \left(1-R_0\right)\nonumber\\
\end{align}

Therefore, for $R_0<1$, the conditions for the Routh-Hurwitz criteria are satisfied and hence DFE is locally asymptotically stable. Now if $R_0>1$, then $a_2<0$ and $C(\lambda)=0$ will possess a positive real solution. Therefore the DFE will be unstable for $R_0>1$. Hence the proof follows.
\end{proof}

The stability of the other three equilibrium points is complicated and does not lead to biologically relevant stability conditions. Therefore, we explore model solutions, relevant model dynamics, important parameters, agreement with real data through numerical simulations.

\section{Numerical Simulation}\label{numerical}
In this section, important properties of the proposed model are investigated numerically. Using different parameter settings, time series and threshold analysis is performed. Moreover, the agreement of the model solution with real data is explored. Through out this section the following set of initial conditions is used unless stated $H(0)=4 \times 10^5$ cells per ml, $I(0)=3 \times 10^{-4}$ cells per ml, $V(0)=357$ RNA copies per ml, $C=0$ cells per ml, $N=100$ cells per ml, $T=500$ cells per ml, $B=100$ cells per ml and $A=0$ molecules per ml (most of the initial conditions are taken from \cite{sasmal2017mathematical,nikin2015role}).

\subsection{Time series and threshold analysis}
We first study the time series of the viral load and antibody count. In Fig. \ref{Fig:time_series_VA}, the viral load and antibody are plotted. The viral load time series experiences a peak between sixth and seven days post infection. However, as soon as the adaptive immune response is activated (after $\tau=7$ days), a sharp decrease is observed in the viral load. On the other hand, the antibody count starts to rise after 7 days post infection and shows saturated type behaviour.

\begin{figure}[t]
	\includegraphics[width=0.45\textwidth]{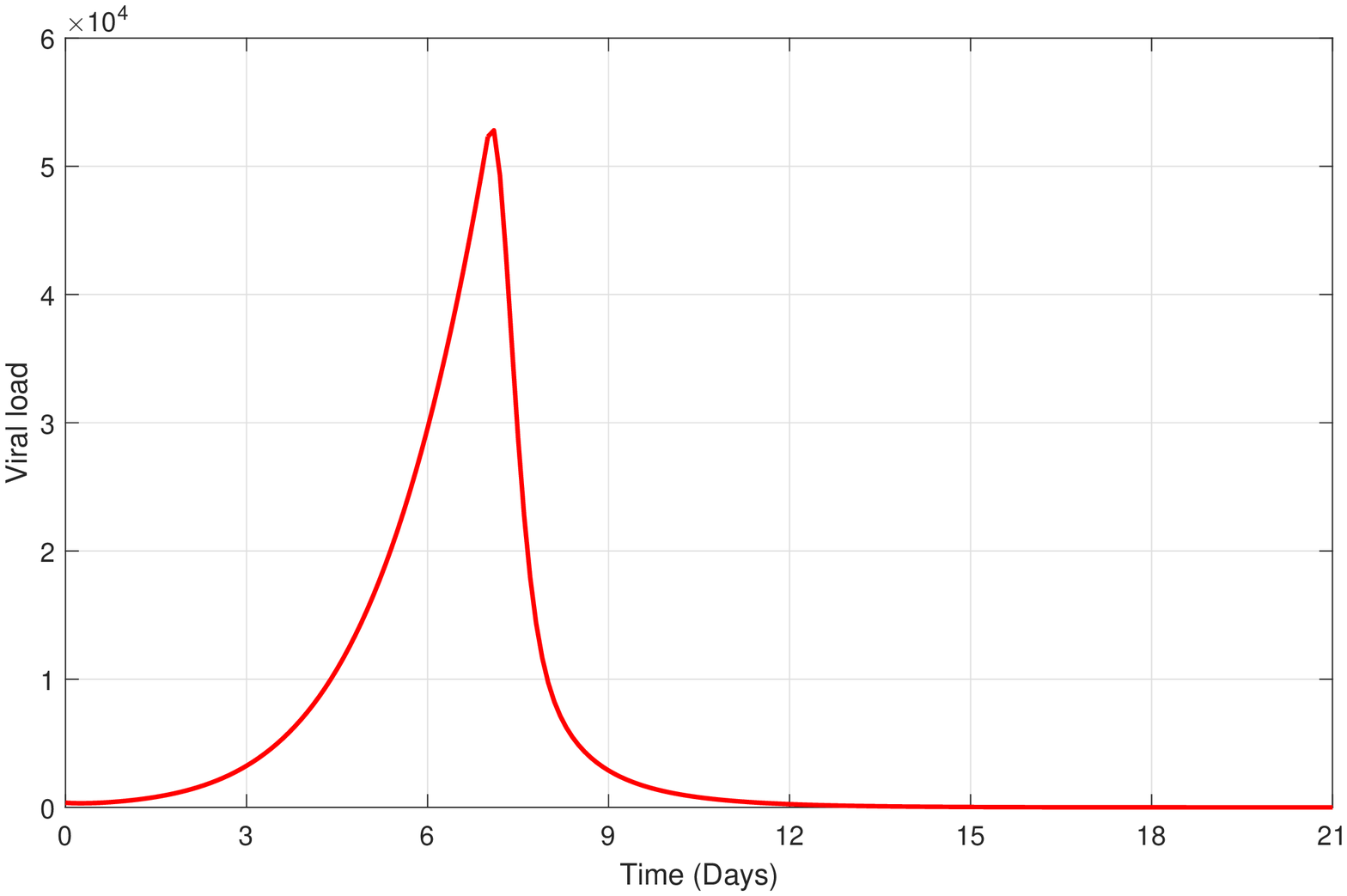}{a}
	\includegraphics[width=0.45\textwidth]{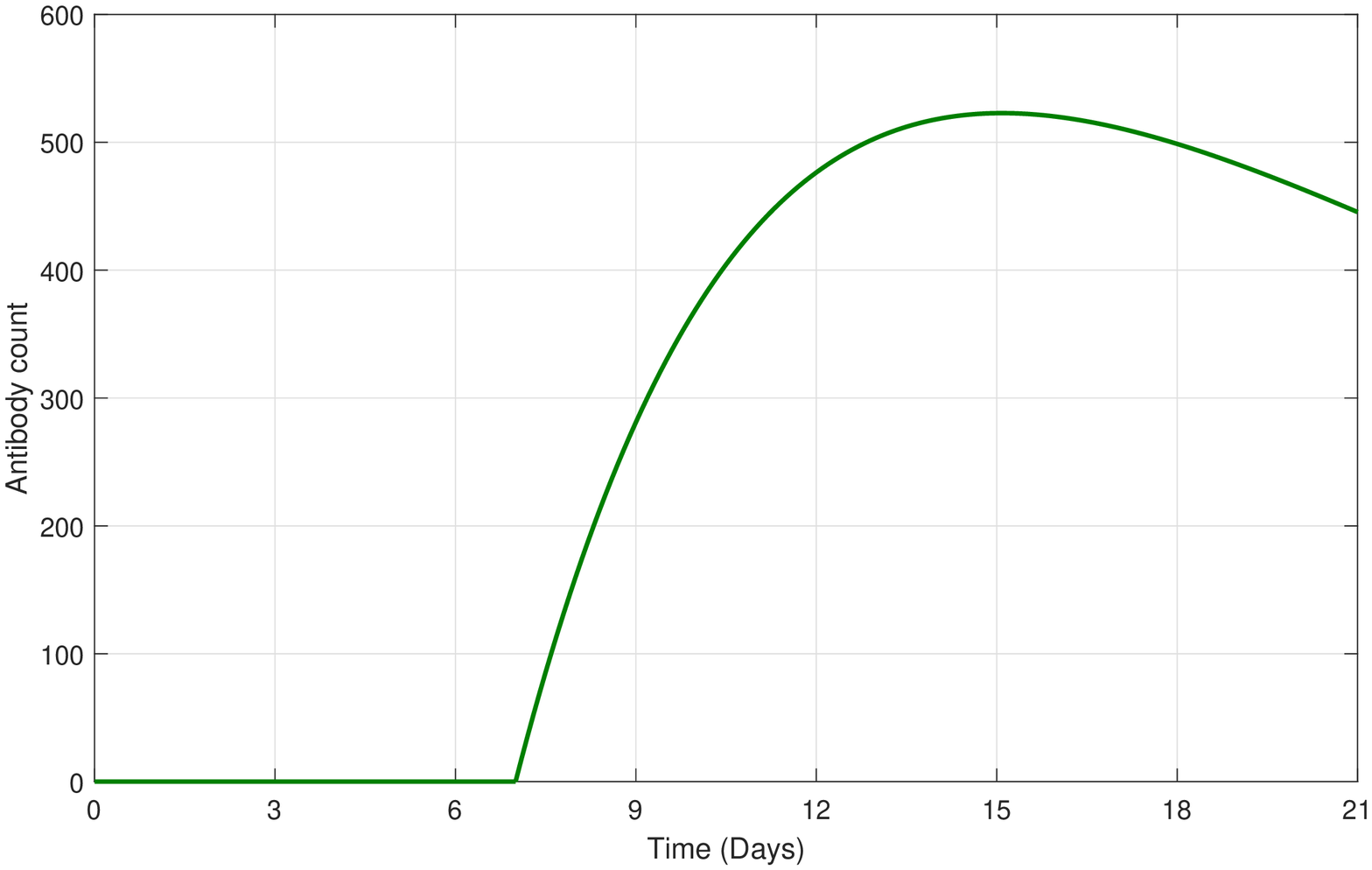}{b}
	\caption{Time evolution of (a) viral load ($V$) and (b) antibody count ($A$) of the model \ref{eq1}. All the parameters are taken from Table \ref{table2} except $\beta = 2 \times 10^{-8}$, $\mu_2 =0.65$, $\mu_3=0.9$, $p_2=0.001$, $p_3=0.05$, $k_1=500$, $k_2=5$, $\eta=0.05$ and $\tau=7$.}
	\label{Fig:time_series_VA}
\end{figure}

Further, we study the threshold for $R_0$. It is observed that $R_0=1$ acts as a critical value for the persistence of virus particles. The virus particles converges to the DFE of the model \ref{eq1} for $R_0<1$ and the viral load converges to a non-zero value as soon as $R_0$ crosses unity. This type of phenomenon is called forward bifurcation where the two equilibrium points switches their stability at a critical value. The diagram is depicted in Fig. \ref{Fig:threshold_r0}. This also ensure that if we vary other parameters involved in the expression of $R_0$, the same type of phenomenon occurs. Thus, in turn parameters such as $\beta$ and $k_1$ can be reduced so as to reduce $R_0$ below unity. 

\begin{figure}[t]
	\includegraphics[width=1.0\textwidth]{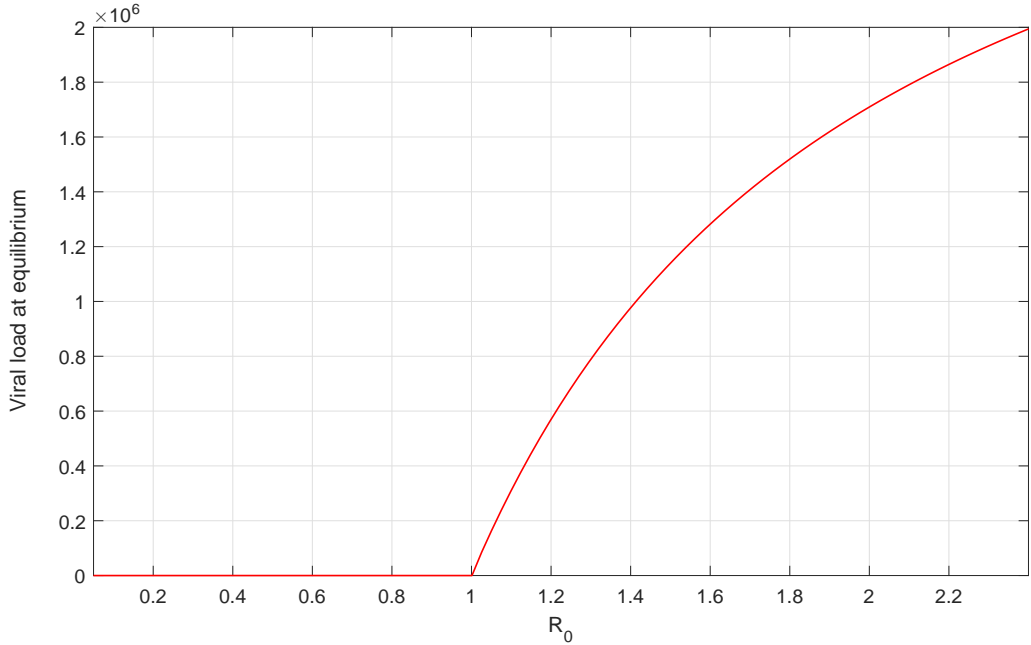}
	\caption{Forward bifurcation diagram with respect to basic reproduction number. All the fixed parameters are taken from Table \ref{table2} with $\mu_2 =0.65$, $\mu_3=0.9$, $p_2=0.001$, $p_3=0.05$, $k_1=500$, $k_2=5$, $\eta=0.05$, $\tau=7$ and $10^{-9} < \beta < 10^{-7}$,}
	\label{Fig:threshold_r0}
\end{figure}

\subsection{Model validation using real data}
SARS-CoV-2 viral load data are obtained from Wolfel et al. \cite{wolfel2020virological}.  
They studied patients from a hospital in Munich, Germany. They reported Daily measurements of viral load in sputum, pharyngeal swabs and stool for 9 patients. Among these patients, there were two patients (namely, patient A and patient B) for whom the growth phase of sputum data was captured. We therefore utilized these two datasets for our analysis. The data was collected from Wolfel et al. \cite{wolfel2020virological} using a online software \cite{webplot2020}. 

The solution curve of viral load ($V(t)$) is fitted to data using the built-in (MATLAB, R2018a) simplex algorithm to minimize the sum of squares difference between simulated indicators and data. We used the MATLAB function `fminsearchbnd' to perform the optimization. During the computation, 100 different starting points in parameter space were chosen using Latin Hypercube Sampling to ensure consistency and uniqueness of the parameter estimates. The fitting is displayed in Fig. \ref{Fig:model_fitting}(a) for patient A and in Fig. \ref{Fig:model_fitting}(b) for patient B. The fixed parameters are taken from Table \ref{table2} with $\mu_2 =0.65$, $k_2=5$ and $\eta=0.05$. The initial conditions are taken as mentioned in the beginning of Section \ref{numerical}. We estimated five parameters directly related to viral load of a patient viz., $\beta$, $k_1$, $p_2$, $p_3$ and $\mu_3$. The estimated parameters for patient A are found to be $\beta=1.7505 \times 10^{-6}$, $k_1=379$, $p_2=0.2805$, $p_3=0.0316$ and $\mu_3=0.8108$. Similarly, the estimates for patient B are obtained as $\beta=5.561 \times 10^{-7}$, $k_1=128$, $p_2=0.9403$, $p_3=0.0057$ and $\mu_3=0.99$.

\begin{figure}[t]
	\includegraphics[width=0.45\textwidth]{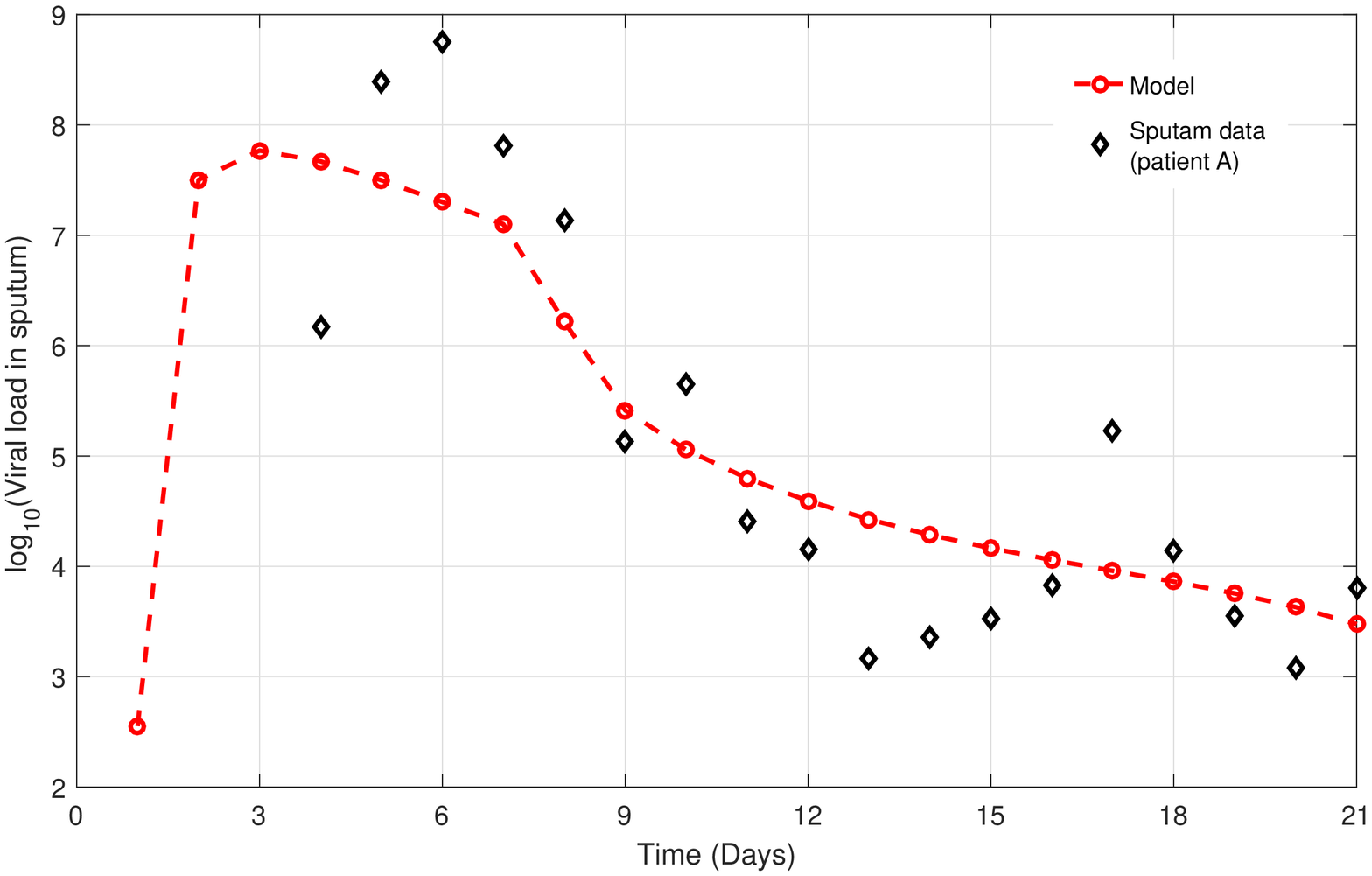}{a}
	\includegraphics[width=0.45\textwidth]{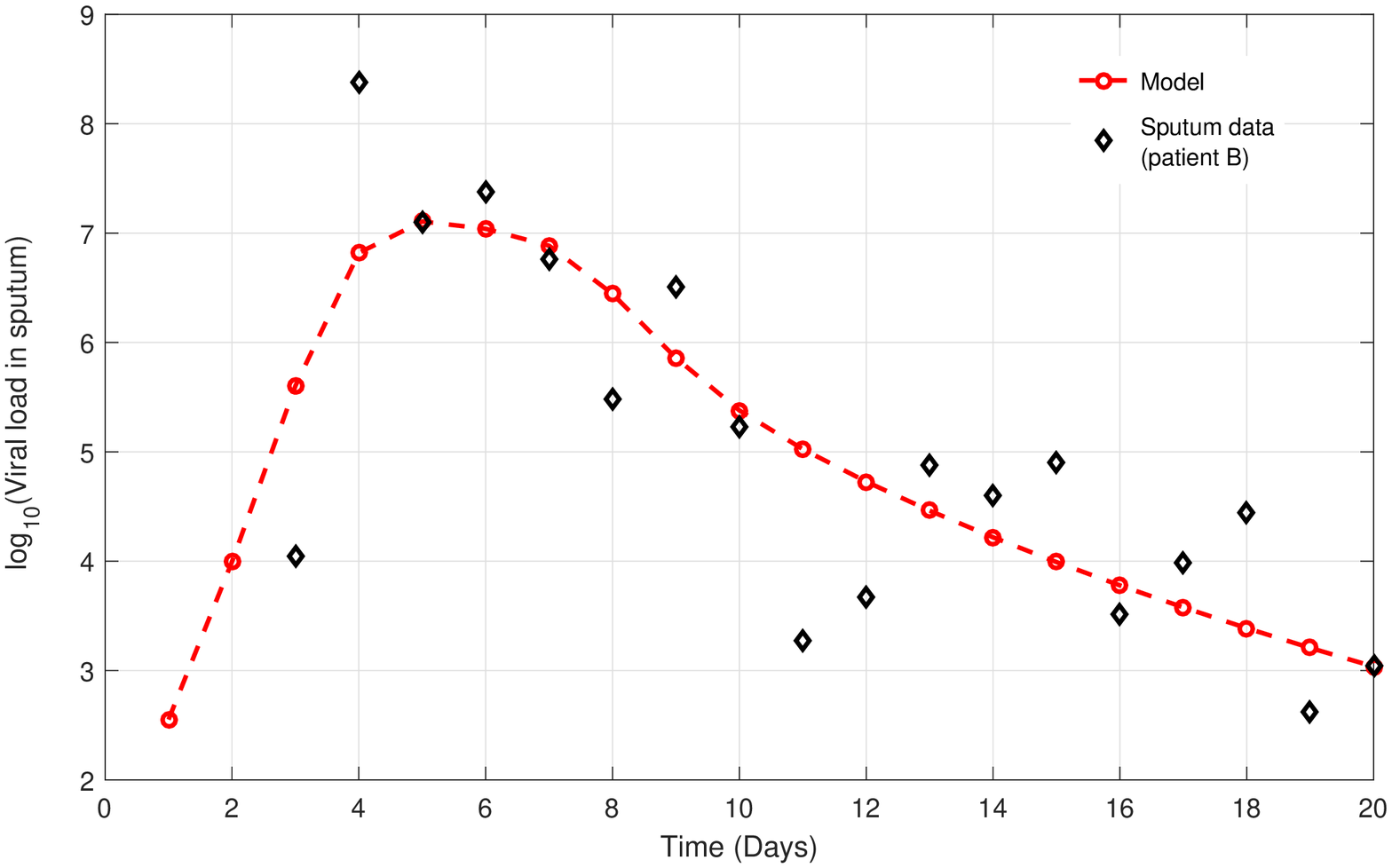}{b}
	\caption{Fitting model solution to (a) patient A data and (b) patient B data.}
	\label{Fig:model_fitting}
\end{figure}

\subsection{Sensitivity analysis}
We performed global sensitivity analysis to identify most influential parameters with respect to the maximum size (or alternatively, the peak of load) of virus particles ($V_{max}$) in 3 months time frame. Partial rank correlation coefficients (PRCCs) are calculated and plotted in Fig. \ref{Fig:PRCCs}. Nonlinear and monotone relationship were observed for the parameters with respect to $V_{max}$, which is a prerequisite for performing PRCC analysis. Following Marino et. al \cite{marino2008methodology}, we calculate PRCCs for the parameters $\beta$, $k_1$, $k_2$, $\mu_2$, $\mu_3$, $p_2$, $p_3$, $\gamma$ and $\eta$. The base values for the parameters $\beta$, $k_1$, $p_2$, $p_3$ and $\mu_3$ are taken as the average of estimated parameters of patient A and patient B. The other base values are $\mu_2 =0.65$, $k_2=5$, $\gamma=0.5$ and $\eta=0.05$. For each of the parameters, 500 Latin Hypercube Samples were generated from the interval (0.5 $\times$ base value, 1.5 $\times$ base value).

\begin{figure}[t]
	\includegraphics[width=1.0\textwidth]{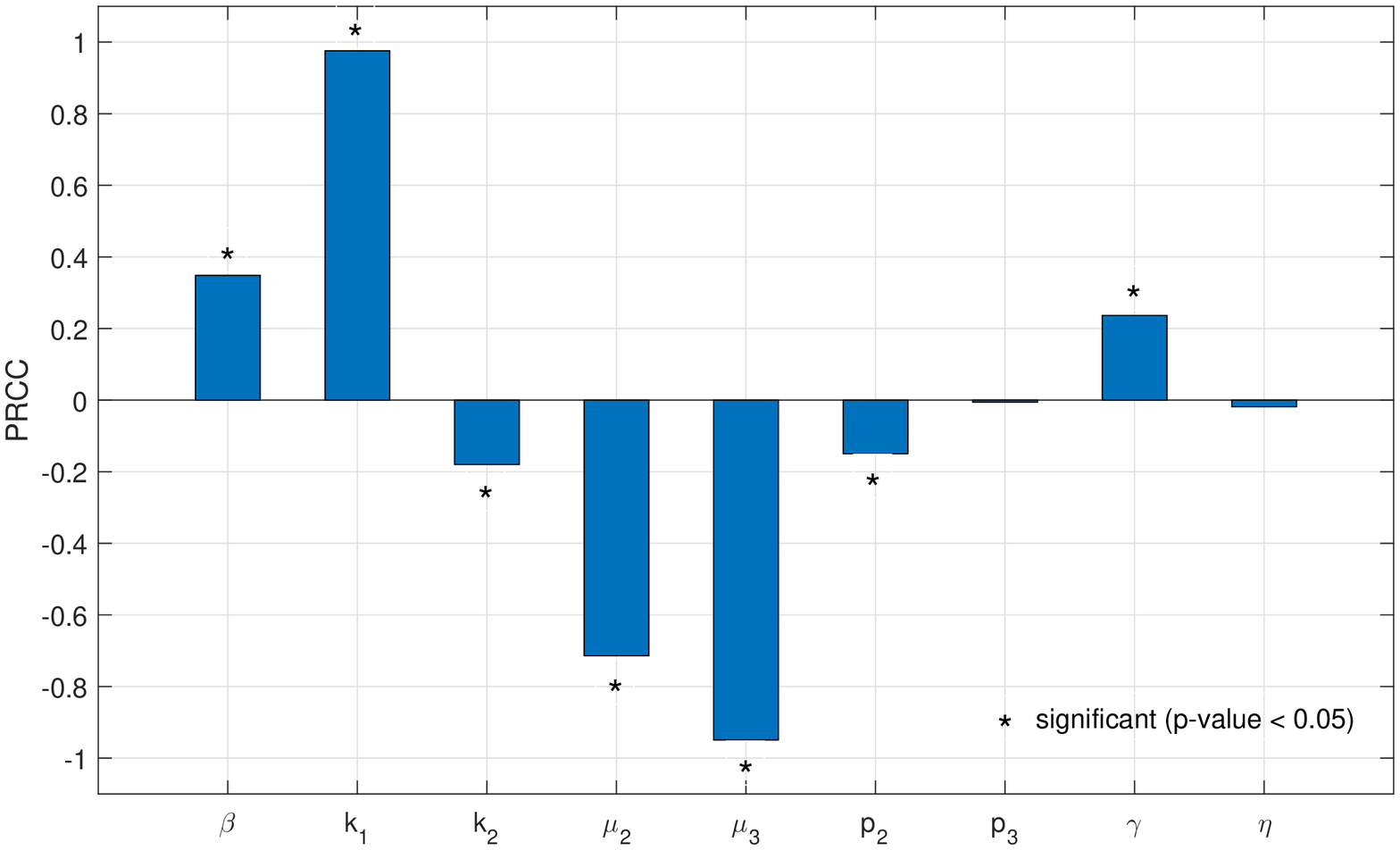}
	\caption{Effect of uncertainty of the model (\ref{eq1}) on the peak of viral load. Parameters with significant PRCC indicated as $^*$ (p-value $<$ 0.05). The fixed parameters are taken from Table \ref{table2} with $\mu_2 =0.65$, $k_2=5$, $\eta=0.05$ and $\tau=7$.}
	\label{Fig:PRCCs}
\end{figure}

It is observed that the parameters $\beta$, $k_1$ and $\gamma$ has significant positive correlations with $V_{max}$. This indicates that the production rate of virus particles from infected cells will increase the chance of larger infection propagation. Besides, the infection rate and the immiunosuppresion rate are positively correlated with the peak of viral load. On the other hand, the natural death rate of infected cells and death rate of virus particles will have significant negative correlation with $V_{max}$. The production rate of cytokines is also negatively correlated with $V_{max}$. These results reinforces the fact that $\beta$ and $k_1$ are very crucial for reduction of viral load. 

\section{Model with antiviral treatment}\label{treatment}
Antiviral drugs can be used to slow SARS-CoV-2 infection or block production of virus particles. These drugs will necessarily save the lives of many severely ill patients and will reduce the time spent in intensive care units for patients, vacating hospital beds. Antiviral medications will, in turn, inhibit subsequent transmission that could happen if the drugs were not given. However, to analyze the effect of antiviral treatment, we consider drugs can block infection and/or production of virus particles. Many studies have suggested various existing compounds for testing \cite{tay2020trinity,encinar2020potential,caly2020fda} as SARS-CoV-2 antiviral drug, but World Health Organization (WHO) is focusing on the following four therapies: an experimental antiviral compound called remdesivir; the malaria medications chloroquine and hydroxychloroquine; a combination of two HIV drugs, lopinavir and ritonavir; and that same combination plus interferon-beta, an immune system messenger that can help cripple viruses \cite{kupferschmidt2020launches}.

Following Zitzmann et al. \cite{zitzmann2018mathematical}, we incorporate antiviral drug treatment in the proposed model \eqref{eq1}. The modified system with antiviral treatment is given by

\begin{eqnarray}\label{eq1_treatment}
\frac{dH}{dt}&=& \Pi- (1-\epsilon_1)\beta HV -\mu_1 H,\nonumber\\
\frac{dI}{dt}&=& (1-\epsilon_1)\beta HV- p_1 TI - p_5 NI - \mu_2 I,\nonumber\\
\frac{dV}{dt}&=& (1-\epsilon_2) k_1 I - p_2 CV - p_3 AV - \mu_3 V,\nonumber\\
\frac{dC}{dt}&=& \frac{k_2 I}{1 + \gamma V} - \mu_4 C,\\
\frac{dN}{dt}&=& rC - \mu_5 N,\\
\frac{dT}{dt}&=& \lambda_1 TC - \mu_6 T,\nonumber\\
\frac{dB}{dt}&=& \lambda_2 BT - \mu_7 B,\nonumber\\
\frac{dA}{dt}&=& G(t-\tau)\eta B - p_4 AV -\mu_8 A.\nonumber 
\end{eqnarray}

\begin{figure}[t]
	\includegraphics[width=0.45\textwidth]{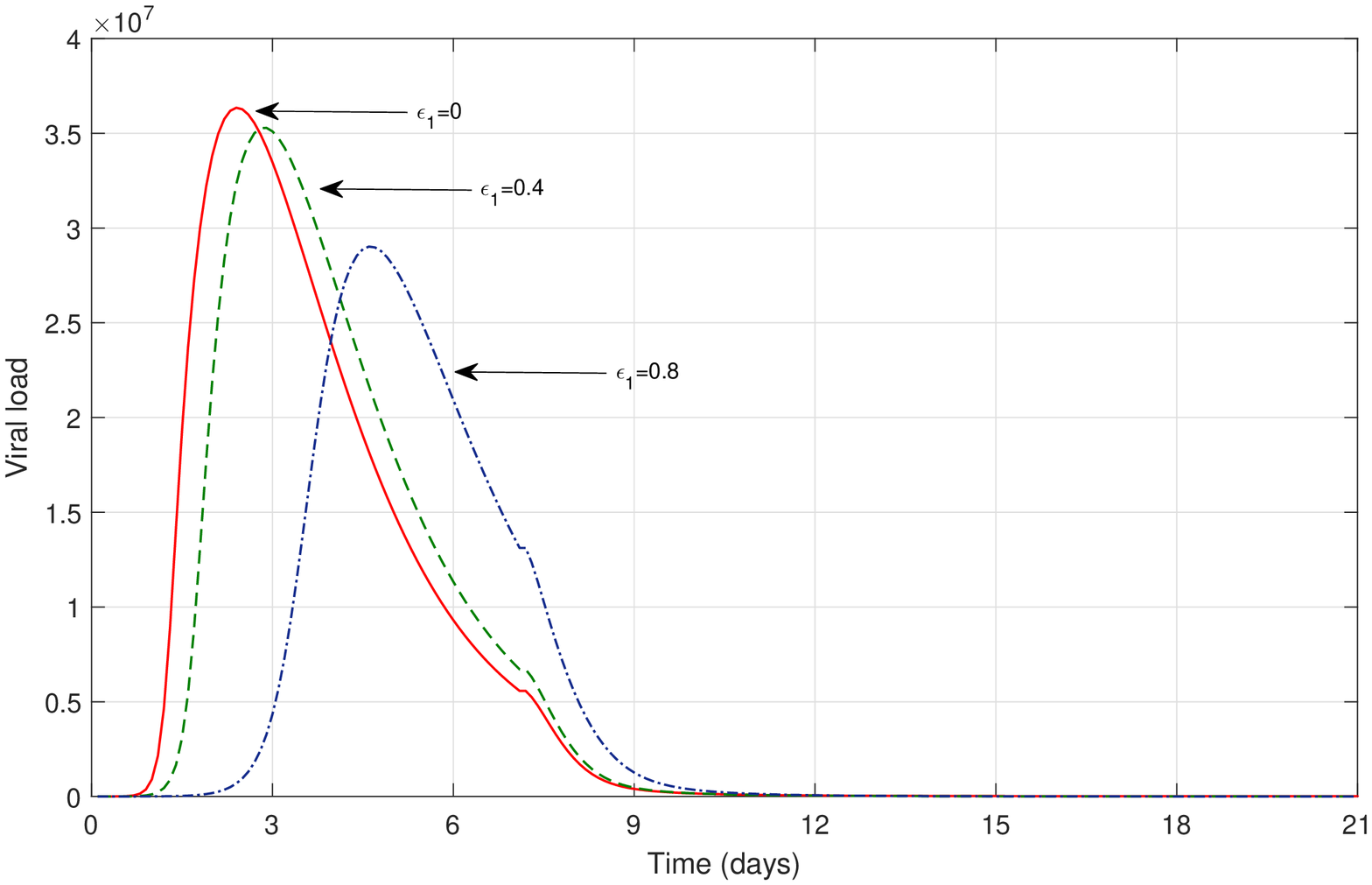}{a}
	\includegraphics[width=0.45\textwidth]{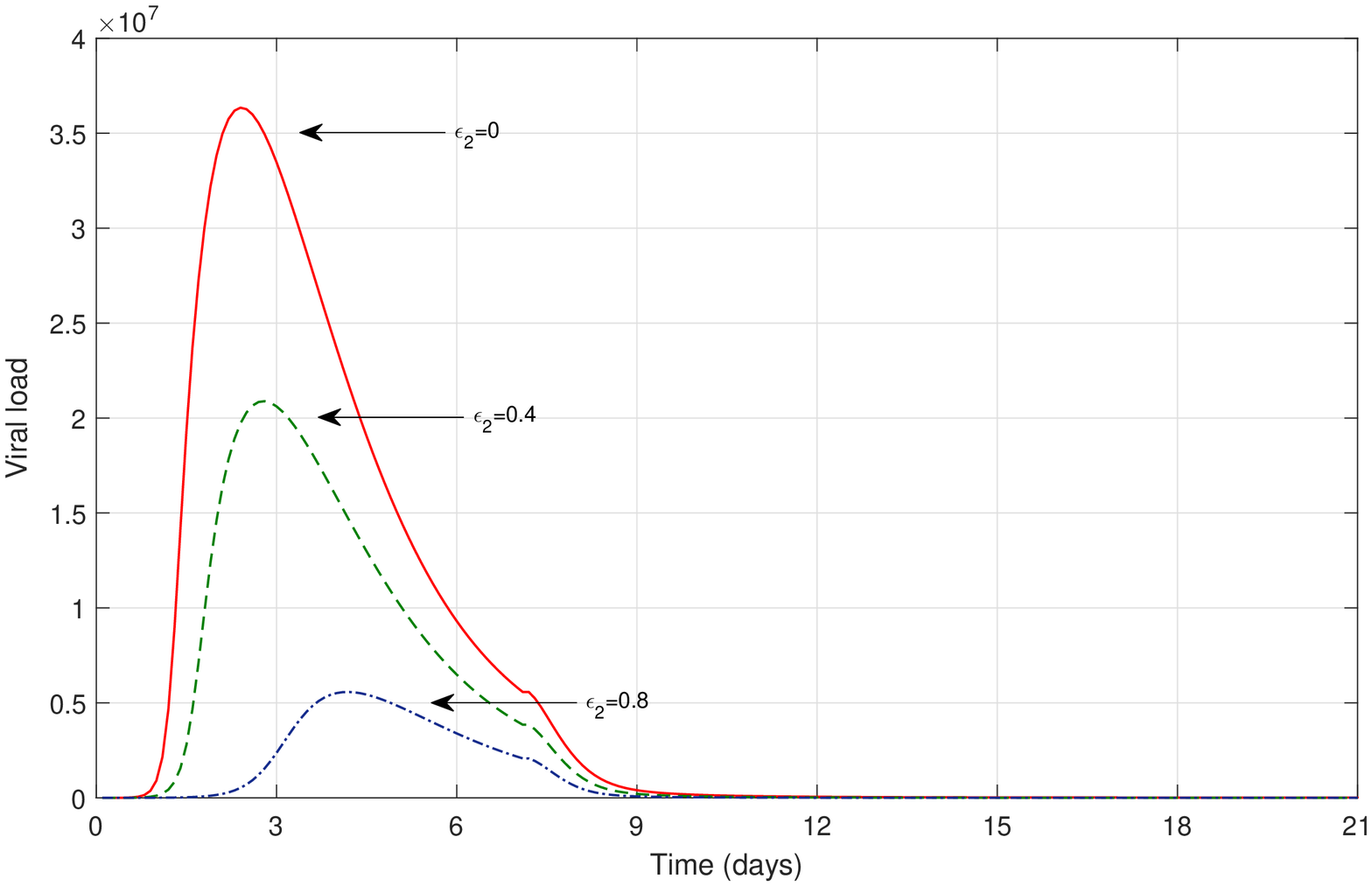}{b}
	\caption{Effect of antiviral drugs that (a) reduce infection or (b) blocks virus production. The time series of viral load is presented for different values of $\epsilon_1$ and $\epsilon_2$. The fixed parameters are taken from Table \ref{table2} with $\mu_2 =0.65$, $k_2=5$, $\eta=0.05$ and $\tau=7$. Other fixed values are taken to be the average of estimated parameters for patient A and patient B.}
	\label{Fig:treatment}
\end{figure}

From Fig. \ref{Fig:treatment}, it can be noted that increase in $\epsilon_1$ reduces the peak of viral load but the duration of high viral load remains same. On the other hand, increase in $\epsilon_2$ significantly reduce both peak of viral load and duration of high viral load. Thus, we conclude that blocking the virus production from infected cells is a more suitable target for antiviral drug development.

Finally, we study the effect vaccination in the viral dynamics of SARS-CoV-2 in humans. A vaccine is a biological preparation that provides active acquired immunity to a particular infectious agent. Thus if an individual is vaccinated, there will be no delay in the development of antibody. Therefore, the delay term $\tau$ is taken to be zero for vaccinated individuals (see Fig. \ref{Fig:treatment_vaccine}). It is observed that vaccination not only reduces the viral load in healthy patients but also reduces the duration of high viremia. 

\begin{figure}[t]
	\includegraphics[width=1.0\textwidth]{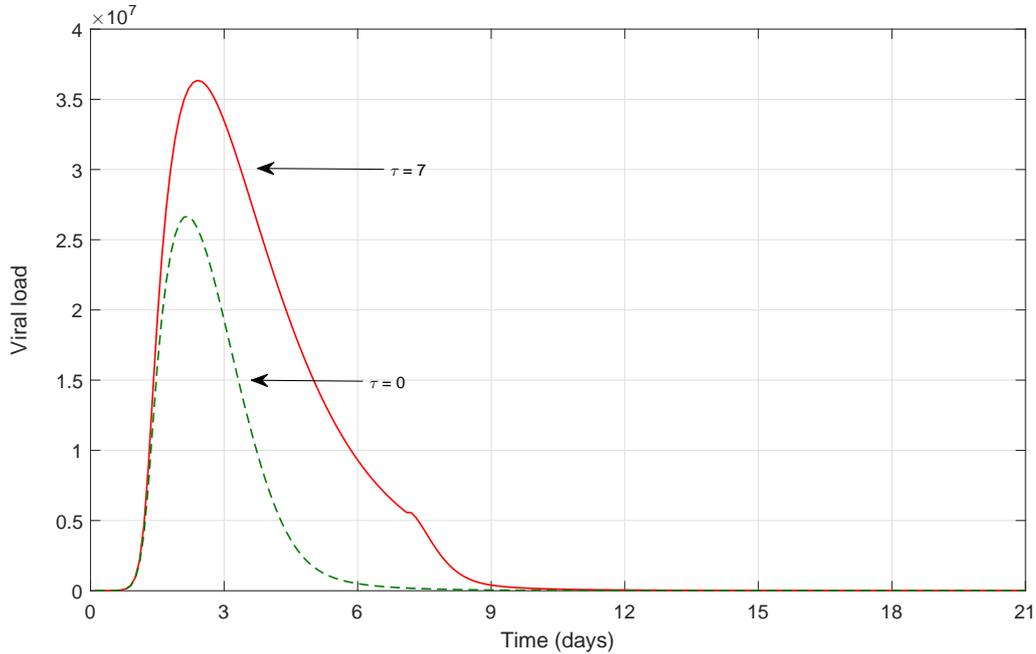}
	\caption{Viral load time series for different values of $\tau$ for the model \eqref{eq1}. The fixed parameters are taken from Table \ref{table2} with $\mu_2 =0.65$, $k_2=5$, $\eta=0.05$ and $\tau=7$. Other fixed values are taken to be the average of estimated parameters for patient A and patient B.}
	\label{Fig:treatment_vaccine}
\end{figure}

Overall, for antiviral drug target, blocking virus production is more fruitful in terms of viral load reduction and vaccination will also be effective.

\section{Discussion and conclusion}\label{discussion}
In this study, we have proposed and analyzed a compartmental model of SARS-CoV-2 transmission within the human body. The much needed innate and adaptive immune responses are incorporated into the model. The eight-dimensional model has four types of equilibrium points. The existence criterion for each type of equilibria is presented. From the local stability of the DFE, the expression for basic reproduction number is obtained. This number is very crucial for the persistence of the virus in the long run. However, the short-term dynamics of the viral load is studied using various numerical techniques. During time series analysis, we observed that the viral load time series experiences a peak between sixth and seven days post-infection, followed by a sharp decrease due to activation of adaptive immune response (see Fig. \ref{Fig:time_series_VA}). A forward bifurcation of equilibria with respect to the basic reproduction number is observed and depicted in Fig. \ref{Fig:threshold_r0}. This also ensures that if we suitably vary parameters involved in the expression of $R_0$, the same type of phenomenon occurs. Thus, in turn, parameters such as $\beta$ and $k_1$ can be decreased to reduce $R_0$ below unity and ensure local asymptotic stability of DFE. 

We used daily measurements of SARS-CoV-2 viral load in sputum for two patients \cite{wolfel2020virological} from a hospital in Munich, Germany. Using the estimated parameters, the global sensitivity analysis of several model parameters with respect to peak viral load is performed. The results indicate that the production rate of virus particles from infected cells will increase the chance of more significant infection propagation. Besides, the infection rate and the immiunosuppresion rate will increase the peak of viral load. Additionally, the natural death rates of infected cells and the death rate of virus particles will have a significant negative correlation with the peak of viral load. The production rate of cytokines is also negatively correlated with the peak of viral load. These results reinforce the fact that $\beta$ and $k_1$ are very crucial for the reduction of viral load. 

Antiviral drugs can be used to slow SARS-CoV-2 infection (or reduce $\beta$) or block the production of virus particles (or reduce $k_1$). Results suggest that a decrease in $\beta$ reduces the peak of viral load but the duration of the high viral load remains the same. On the other hand, a decrease in $k_1$ significantly reduce both peak of viral load and period of high viral load. Thus, we conclude that blocking virus production from infected cells is a more suitable target for antiviral drug development. Moreover, vaccination can reduce the viral load in healthy patients and also reduce the duration of high viremia in the body. But vaccine development is a complicated task; therefore, during the vaccine development phase, blocking virus production from infected cells can be targeted for antiviral drug development.  

Researchers have been putting more effort to develop a vaccine to tackle COVID-19 \cite{dhama2020covid,lurie2020developing}. The journey has started with the first clinical trial just two months after the genetic sequence of the virus. The mathematical model developed in this paper can be improved by adding more detailed data to reveal prophylactic and therapeutic interventions. Our theoretical findings should be tested clinically for the implementation. Further insights into immunology and pathogenesis of SARS-CoV-2 will help to improve the outcome of this and future pandemics.

\bibliographystyle{unsrt}
\bibliography{covid_within}
\end{document}